\def\BibTeX{{\rm B\kern-.05em{\sc i\kern-.025em b}\kern-.08em
    T\kern-.1667em\lower.7ex\hbox{E}\kern-.125emX}}
\Crefname{problem}{Problem}{Problems}
\def\BibTeX{{\rm B\kern-.05em{\sc i\kern-.025em b}\kern-.08em
    T\kern-.1667em\lower.7ex\hbox{E}\kern-.125emX}}
\newtheorem{theorem}{Theorem}
\let\union\cup
\let\Union\bigcup
\newcommand{\ERSP}{\ensuremath{E^{\text{RC}}}}
\newcommand{\cp}{\ensuremath{\textnormal{c}}} 
\newcommand{\ports}{\ensuremath{P}} 
\newcommand{\MLU}{\ensuremath{\Theta}} 
\newcommand{\demand}{\ensuremath{t}} 
\renewcommand{\epsilon}{\varepsilon}
\renewcommand{\omega}{\varomega}
\newcommand{\univ}{\ensuremath{U}}
\newcommand{\setcover}{\ensuremath{\mathcal{C}}}
\newcommand{\sets}{\ensuremath{\mathcal{S}}}
\newcommand{\ssep}{\mid}
\newcommand{\N}{\ensuremath{\mathbb{N}}}
\newcommand{\be}{\ensuremath{\coloneqq}}
\DeclarePairedDelimiter\ceil{\lceil}{\rceil}
\newsavebox{\jamBox}
\newlength{\jamWidth}
\newcommand{\jamIfToBig}[2]{%
    \savebox{\jamBox}{#2}%
    \settowidth{\jamWidth}{\usebox{\jamBox}}%
    \ifthenelse{\jamWidth < #1}%
        {\usebox{\jamBox}}%
        {\resizebox{#1}{!}{\usebox{\jamBox}}%
    }%
}
\begin{document}
\renewcommand{\topmargin}{-.7in}
\begin{acronym}
\acro{DAG}{directed acyclic graph}
\acro{DEFO}{Declarative and Expressive Forwarding Optimizer}
\acro{ECMP}{Equal Cost Multipath}
\acro{IGP}{Interior Gateway Protocol}
\acro{IE}{Ingress-Egress}
\acro{ISP}{Internet Service Provider}
\acro{IS-IS}{Intermediate System to Intermediate System}
\acro{LC-MCFS}{\textsc{Linecard-Minimum Multi-Commodity-Flow Subnetwork}}
\acro{LDP}{Label Distribution Protocol}
\acro{LER}{Label Edge Router}
\acro{LP}{Linear Program}
\acro{LSR}{Label Switched Router}
\acro{LSP}{Label Switched Path}
\acro{MCF}{Multicommodity Flow}
\acro{MLU}{Maximum Link Utilization}
\acro{MO}{Midpoint Optimization}
\acro{MPLS}{Multiprotocol Label Switching}
\acro{MSD}{Maximum Segment Depth}
\acro{NDA}{non-disclosure agreement}
\acro{OSPF}{Open Shortest Path First}
\acro{PoP}{Point of Presence}
\acro{RSVP}{Resource Reservation Protocol}
\acro{RL2TLE}{Router-Level 2TLE}
\acro{SDN}{Software-defined Networking }
\acro{SC2SR}{Shortcut 2SR}
\acro{SID}{Segment Identifier}
\acro{SPR}{Shortest Path Routing}
\acro{SR}{Segment Routing}
\acro{SRLS}{Segment Routing Local Search}
\acro{TE}{Traffic Engineering}
\acro{TLE}{Tunnel Limit Extension}
\acro{WAE}{WAN Automation Engine}
\acro{w2TLE}{Weighted 2TLE}
\acro{ILP}{Integer Linear Program}
\acro{RSP}{Route Switch Processor}
\acro{NPU}{Network Processing Unit}
\end{acronym}

\title{Green Traffic Engineering by~Line~Card~Minimization
}

\author{\IEEEauthorblockN{%
Daniel Otten,
Max Ilsen,
Markus Chimani,
Nils Aschenbruck}
\IEEEauthorblockA{%
University of Osnabrück, Institute of Computer Science, Osnabrück, Germany\\
Email: \{daotten, max.ilsen, markus.chimani, aschenbruck\}@uos.de}}

\maketitle

\begin{abstract}
    Green Traffic Engineering encompasses network design and traffic routing strategies that aim at reducing the power consumption of a backbone network.
    We argue that turning off linecards is the most effective approach to reach this goal.
    Thus, we investigate the problem of minimizing the number of active line cards in a network while simultaneously allowing a multi-commodity flow being routed and keeping the maximum link utilization below a certain threshold.
    In addition to proving this problem to be NP-hard, we present an optimal ILP-based algorithm as well as a heuristic based on 2-Segment Routing.
    Lastly, we evaluate both approaches on real-world networks obtained from the Repetita Framework and a globally operating Internet Service Provider.
    The results of this evaluation indicate that our heuristic is not only close to optimal but significantly faster than the optimal algorithm,
    making it viable in practice.
\end{abstract}

\begin{IEEEkeywords}
green traffic engineering, segment routing, energy-aware routing, linecards
\end{IEEEkeywords}

\section{Introduction}

Recently, the influence of increased energy consumption on the environment and the economy has become more and more visible. High energy consumption has a severe impact on the environment: not only due to the increase in carbon emissions, but also because of the exploitation of natural resources. Moreover, energy prices have steadily increased over the last few years.
Hence, there are many incentives to work on the reduction of energy consumption.

In this context, the power consumption of \ac{ISP} networks is of special interest, as they require large amounts of energy. For example, a major German \ac{ISP} recently stated that its energy consumption is as high as the energy-consumption of the entire city of Berlin (cf.~\cite{telekom_enviroment}). The infrastructure of these so-called \emph{backbone networks} is designed to handle the maximum daily amount of traffic. But as shown in \cite{DBLP:conf/lcn/SchullerACHS17}, for nearly eight hours a day, the amount of traffic is below $50$\% of this maximum.
Thus, for a third of the day, the networks are overprovisioned in terms of capacity. Reducing the amount of unneeded capacity by turning the corresponding hardware off is a promising approach to save a huge amount of energy.

Lately, several researchers tackled this problem using various \ac{TE} approaches. However, many of these approaches have shortcomings that make them unsuitable in practice.
For example, some of them use \ac{TE} technologies like \ac{MPLS} with the \ac{RSVP}. These techniques tend to induce an unnecessarily high amount of overhead in the network. Thus, we propose an approach that uses \ac{SR}, a newer light-weight technology.

Moreover, researchers often use old energy models. Namely, they either assume that a whole router can be switched off or that the number of inactive links directly affects the energy consumption of the network. Both assumptions do not hold true. It is not possible to turn a router off completely as a router does not only provide connections within the backbone network but also to customer networks.
Furthermore, it does not hold true that the number of active links directly influences the energy consumption. Only when enough links per router are switched to an inactive state, it is possible to turn a linecard off, leading to a significant reduction in energy consumption of the corresponding router.  We address this issue by examining the problem of linecard minimization. The contributions of this paper are: We show that in general this problem is NP-hard. Then, to calculate a nearly optimal solution, we introduce a 2-\ac{SR} based heuristic. We evaluate this heuristic on different backbone topologies, showing that we can achieve near-optimal results with a significantly reduced computing time.

The remainder of this work is structured as follows: At first, in \Cref{sec:background}, we discuss the power consumption of a backbone network and provide some information about \ac{SR}. In \Cref{sec:relatedWork}, we give an overview of the related work. Thereafter, in \Cref{sec:GreenTeProb}, we present the \ac{LC-MCFS} problem, which we prove to be NP-hard. To still obtain near-optimal solutions to the problem in practice, we introduce an \ac{SR}-based heuristic (\Cref{sec:greenSR}).
Lastly, in \Cref{sec:evaluation}, we evaluate this approach on two datasets: one dataset containing real topologies but artificial traffic, and one obtained by measurements in an \ac{ISP} network.

\section{Background}\label{sec:background}
In this section, we present some background information regarding \ac{SR} and the energy consumption of a backbone network.

\subsection{Segment Routing}
\ac{SR} is a source routing paradigm, whose core idea is to split up the path of a packet into multiple parts, the so-called segments.
There are three different ways to define a segment:
by links (in which case they are called adjacency segments), by network services, and by nodes (routers) of the corresponding network.
Most \ac{SR}-based \ac{TE} focuses on the latter, i.e.\ node segments, as this simplifies optimization as well as practical operation while still offering full traffic steering capabilities.
Thus, for the remainder of this paper, we will also rely entirely on node segments.
These are realized by assigning a unique SSID to every node in the network.
The path of a packet is then determined by a stack of node SSIDs, i.e. the segment list, which is added to the packet at the ingress node.
The packet passes all nodes in the stack from top to bottom, and the path between each pair of consecutive segments is determined by the underlying \ac{IGP}.
In the context of the \ac{IGP} \ac{OSPF}, \ac{SR} can be referred to as the concatenation of shortest paths between node segments.

If the maximum number of segments is bound by a number~$k$, we call the routing paradigm $k$-\ac{SR}.
When $k$ equals the number of nodes (routers) in the network, full traffic control is enabled.
In contrast, if $k$ equals $1$, routing fully depends on the \ac{IGP}.
Adding an arbitrary number of segments to a packet is theoretically possible, but often the number of segments is limited by hardware constraints.
In addition, every segment adds a bit of overhead to the packet.
Thus, it is typically a good idea to keep the number of segments as small as possible.
Fortunately, in previous work it was shown that for most purposes, it is sufficient to rely on just two segments~\cite{bhatia}.

The big advantage of \ac{SR} over other \ac{TE} technology is its significantly smaller overhead.
For example, using the \ac{RSVP} to realize \ac{MPLS} tunnels requires configuration and maintenance on every node of the tunnel (c.f. \cite{rfc3209}).
This causes a lot of overhead, especially when the number of tunnels increases.
Hence, this approach does not scale well with the network size.
By comparison, \ac{SR} is a stateless protocol that requires only a configuration on the first node of an \ac{SR}-path.
All information needed is encoded in the packet itself.
Moreover, there is only one requirement for \ac{SR}:
the underlying \ac{IGP} needs a specific extension.
Other protocols such as \ac{LDP} are not required.
Thus, the overhead introduced in the network is reduced, resulting in an improved scalability with the network size.
Overall, good scalability and traffic control capabilities make \ac{SR} a valuable tool for \ac{TE}.

\subsection{Power Consumption of a Backbone Router}\label{sec:power}

As a backbone network consists of routers connected by links between them, the total power consumption of these network is the total power consumption of its routers.
The power consumption of each router originates from three components: the chassis, the \ac{RSP}, and the linecards. The chassis provides cooling and other basic functions. The \acp{RSP} are X86 servers running Linux systems; they execute the routing protocols and control the linecards. Note that \ac{RSP} is the term used by Cisco, other vendors use a different name for the same concept---e.g., the same hardware built by Arista is named Supervisor Module.
Lastly, linecards provide the physical endpoints of the connections established by the corresponding router. These endpoints are so-called ports. In summary, the power consumption $E(N)$ of a network~$N$ can be written as
\begin{align*}
    E(N) = \sum_{R \in N} E_{R}
    = \sum_{R \in N} \left(\ERSP_R + \sum_{\ell \in R} \left(E_{\ell} + \sum_{p \in \ell} E_{p}\right)\right),
\end{align*}
with $E_{R}$ denoting the power consumption of a router~$R$, $E_{\ell}$ the power consumption of a linecard~$\ell$, and $E_{p}$ the power used to provide services by the port~$p$, including the power used for optics. The energy used by the \ac{RSP} and chassis of router~$R$ is denoted by~$\ERSP_R$.

At an initial glance, switching off routers may seem like a suitable goal for a green \ac{TE} approach.
However, this is usually not possible, especially in \ac{ISP} networks.
The routers take a very long time to reboot, and hardware defects often occur during the reboot as the routers are designed for long uptimes.
Moreover, especially at the edge of the network, routers are used for peering with customer networks. If they are turned off, customers may be disconnected from the network.

However, some linecards like the A99-8X100GE-TR linecard are built with a green design in mind and can be switched off when they process no traffic.
In \cite{power_consumption} it was shown that linecards consume the majority of the power within a router.
Moreover, Cisco provides a power calculator tool~\cite{cisco-power}, which gives an overview of the power consumption of each component used within a router.
With this tool it is easy to confirm that the majority of power is consumed by the linecards---example output is presented in \Cref{tab:Power_consumption_overview1}.
Furthermore, the authors of \cite{Modeling_power_consumption} studied the power consumption of backbone routers considering different load scenarios.  The authors showed that the amount of traffic consumed by modern linecards is nearly independent of the amount of traffic they process. Even when only half of the ports of a linecard are active, the power consumption remains the same. Thus, minimizing the number of active links in the network has no direct effect on the power consumption. As stated before, it is usually not possible to turn a whole router off to save energy.
Therefore, the only suitable target function for a green \ac{TE} approach is to minimize the number of active linecards within the network.
Only if one can switch a linecard to an inactive state, it is possible to save a significant amount of energy. In this paper, we address this topic by developing a \ac{TE} approach which aims to minimize the number of active linecards within a network.
\begin{table}
	\centering
	\caption{Power consumption of an ASR-9912 router based on \cite{cisco-power}}
	\label{tab:Power_consumption_overview1}
	\scriptsize
	\begin{tabular}{l r}
		\toprule
		Component & Power Consumption\\
		\midrule
         A99-8X100GE-TR  & 1100 W \\
         A99-8X100GE-TR  &  1100 W \\
         A99-8X100GE-TR  & 1100 W \\
         \textbf{Linecards Overall} & \textbf{3300 W} \\
         A9K-RSP5-SE  & 275 W\\
         A9K-RSP5-SE  & 275 W\\
         \textbf{RSPs Overall} & \textbf{550 W}\\
         Fans and Chassis & 950 W \\
         \textbf{Total Power Consumption} & \textbf{4800 W}\\
		\bottomrule
	\end{tabular}
\end{table}

\section{Related Work}\label{sec:relatedWork}
Many researchers have developed approaches to minimize the power consumption of different network types. Some researchers like \cite{Energy_Aware_Managment,Energy_Aware_Managment-SPR,Reducing-Power-Consumption} aim to minimize the number of active routers in the network. As discussed before, this is no suitable approach as shutting routers off can possibly disconnect participants from the network.

Most of the existing work like \cite{Disruption_Time-Aware_Green,Dynamic_Link_Sleeping, Load-Dependent-Power} focuses on switching links to an inactive state. This is easier to realize as only the traffic that was routed over the turned off links has to be rerouted. However, as mentioned before, energy-saving is only possible if enough links per router are deactivated to switch a whole linecard to an inactive state. To the best of our knowledge, there is no approach that tackles this issue.

Only the work of \cite{Zhang} uses the concept of linecards. They propose an ILP-based algorithm which reduces the number of active links within the network.  The authors focus on k-shortest path for traffic engineering. This means, they shrink the set of all possible paths to paths that are only a bit longer than the shortest path between two routers. Even on small instances, with 20 nodes, it was not possible to calculate an optimal solution. Thus, we state that this algorithm cannot be used in larger \ac{ISP} backbone networks, which can rach sizes of several hundred nodes. Moreover, the authors assume that one connection is realized by just one linecard. Thus, this approach is minimizing the number of active links in a network.

In the following, we highlight some of the many different algorithmic approaches that have been developed to turn links off. Some authors like \cite{Augmenting-EEE} or \cite{Energy_Aware_Managment} rely on metric tuning for \ac{TE}. The core idea of metric tuning is to change the link metrics in a way that a predefined target is reached. Thus, it relies fully on the \ac{OSPF} \ac{IGP}. Network operators often hesitate to change the link metrics as this can cause a lot of negative unwanted side effects.
Therefore, the metrics are changed as rarely as possible. Especially for a short-term goal like switching some links off, it is not an appropriate method.

Other authors like
\cite{An_SDN_energy_saving_method,Per_Packet_based,Intelligent_Path,GreyWolf,Energy_Consumption_Optimization_for_SoftwareDefined_Networks,Dynamic_Link_Sleeping,green_sr,An_SDN-based_energy-aware_traffic} use a \ac{SDN} controller to route the traffic through the network.
An \ac{SDN} controller is a centralized instance that controls the traffic flow based on the information provided by the routers. To control the traffic, most authors allow arbitrary paths through the network using \ac{MPLS} or similar technologies.
As mentioned before, this can cause a lot of overhead and reduce the network performance.

Therefore, the authors of \cite{green_sr} decided to use a \ac{SR}-based \ac{SDN} approach.  They introduce a simple algorithm to check if a link with low utilization can be turned off. At first, the algorithm tests whether turning this link off splits the network into two disconnected parts. In a second step, a \ac{SR} policy is calculated such that the \ac{MLU} is lower than a predefined threshold.
However, these approaches are strongly based upon a centralized management unit. Such highly automated systems always feature the risk of an accidental misconfiguration, which can have a tremendous influence on the overall network performance. Thus, network operators often prefer a single stable solution that can be tested before applying it to the network.
In this paper, we address the need for an approach, that minimizes the power consumption of a router by minimizing the number of active linecards with a single stable SR configuration, that once it is brought to the network can be used over multiple hours.

\section{Model and Problem Complexity}\label{sec:GreenTeProb}
In this section, we explain our graph-based model of the network and give an \ac{ILP} formulation of the energy saving problem we are tackling herein.
Moreover, we show that this problem is NP-hard.

\subsection{Graph-based Model}\label{sec:graphModel}
We model the network as a directed graph $G=(V,A)$. The set of vertices~$V$ represents active routers in the network, and the set of arcs~$A$ represents its links. Note that we use a multigraph instead of a simple graph to allow for parallel links. As routers establish full duplex connections, it is necessary to model each link as two directed arcs, i.e., one link between router~$u$ and router~$v$ is represented by a set of two distinct arcs~$\{a_{uv},a_{vu}\}$. One such pair of arcs representing a connection is made up of ports on the endpoints. The capacity of this connection is determined by the number and types of the ports that build up the connection. Let $\ports(a)$ denote the set of ports establishing a connection represented by arc~$a$. It holds that
\begin{align*}
    \ports(a_{uv}) = \ports(a_{vu})
\end{align*}
as one port builds a full duplex connection. The capacity of a port is denoted as $c_{p}$. Thus, the capacity of an arc is given by
\begin{align*}
    \cp(a_{uv}) = \cp(a_{vu}) = \sum_{p \in \ports(a_{uv})}\cp_{p}.
\end{align*}
Both arcs $a_{uv}$ and $a_{vu}$ have the same capacity as they are established by the same set of ports. However, their utilization can differ since there can be more traffic routed from $u$ to $v$ than from $v$ to $u$.
\begin{problem*}
\centering

    \begin{flalign}
     & \makebox[0pt][l]{$\displaystyle{}\text{min }   \sum_{v \in V}  \ceil*{ {\frac{\sum_{p \in v} \pi_{p} }{k}} } $} \\ 
     & \text{s.t.}  & \sum_{(x,y) \in V^{2}} f_{xy}(a) &\; \leq \; \theta  		\sum_{p \in \ports(a)} \pi_{p} c_{p} & &\forall a \in A \\
     \setlength{\nulldelimiterspace}{0pt}
     & & \sum_{a_{uv} \in A} f_{xy}(a_{uv}) - \sum_{a_{vu} \in A} f_{xy}(a_{vu}) &\; = \;
            \left\{\begin{IEEEeqnarraybox}[\relax][c]{l's}
                    -\demand_{xy} & if $v=x$\\
                    \demand_{xy} & if $v=y$\\
                0 & else
            \end{IEEEeqnarraybox}\right.
        & & \forall (x,y,v) \in V^3\\
& & \sum_{p \in \ports(a_{uv})} \pi_{p} &\; = \; \sum_{p \in \ports(a_{vu})} \pi_{p} & & \forall a_{uv} \in A \\
& & \pi_{p} &\; \in \lbrace 0 ,1 \rbrace \;  & &\forall p \in \bigcup_{a \in A} \ports(a) 
    \end{flalign}

\medskip
\caption{LC-MCFS formulated as an \ac{ILP}.}
\label{problem:ideal_lp}
\end{problem*}

\subsection{Problem Description and ILP Formulation}
Let us now introduce the \acf{LC-MCFS} problem and the corresponding \ac{ILP}, see \Cref{problem:ideal_lp}.
In this problem, we are given a directed network~$G=(V,A)$ with capacities~$\cp\colon A \to \N$ as described in \Cref{sec:graphModel},
and traffic demands~$\demand: V \times V \to \N$ that need to be routed through $G$.

As shown in \Cref{sec:power}, it is necessary to minimize the number of active linecards within the network.
Here, we have to distinguish between two different scenarios: either the port-to-linecard mapping is fixed, or it is possible to reconfigure the linecards.
Our algorithms deal with the latter scenario; we assume that we are able to assign ports to arbitrary linecards and that the linecards can possibly be turned off during times of low utilization.
In our \ac{ILP}, we model this by introducing a binary variable~$\pi_{p}$ for every port~$p$ to represent whether $p$ is active or not.
Assuming a fixed number of ports~$k$ that a linecard can hold, this yields the following target function:
\begin{align*}
\sum_{v \in V}  \ceil*{\frac{\sum_{p \in v} \pi_{p} }{k}}.
\end{align*}

The first type of constraints in \Cref{problem:ideal_lp} ensures that the total sum of flow routed over an arc~$a$ cannot exceed $\theta$ times its capacity.
This way, we keep the \ac{MLU} below a threshold $\theta$ in order to prevent congestion.
When denoting the amount of $x$-$y$-flow routed over an arc $a$ by $f_{xy}(a)$, the \ac{MLU} $\MLU$ can be expressed as
\begin{align*}
    \MLU = \max_{a \in A }\, \frac{\sum_{(x,y) \in V^{2}} f_{xy}(a)}{ \sum_{p \in \ports(a)} \pi_{p} \cp_{p}} \leq \theta.
\end{align*}

The second and third type of constraints in \Cref{problem:ideal_lp} model the conservation of flow and the establishment of duplex connections, respectively.

\subsection{NP-Hardness}\label{sec:npHardness}
To show that \ac{LC-MCFS} is also of theoretical interest, we first prove its NP-hardness for the restricted case of acyclic graphs and then extend our proof to fit our model of two opposing directed arcs for each link (i.e., full duplex connections).
In the proof, we assume that each arc has exactly two corresponding ports (one per connected router):
this simplifies the objective function to $\sum_{v \in V} \ceil{\frac{\deg_{G'}(v)}{k}}$ where $\deg_{G'}(v)$ denotes the degree of vertex~$v$ in the solution subgraph~$G'=(V,A')$ which contains exactly those arcs that are left active.

However, we will not only discuss the scenario where ports can be assigned to arbitrary linecards but also the one where the port-to-linecard mapping is fixed.
There, we are additionally given an assignment of ports to linecards, i.e., for each vertex~$v$, a partitioning of its incident arcs into $\ceil{\frac{\deg_{G}(v)}{k}}$ sets of size at most $k$.
These sets correspond to linecards that can only be turned off if all of their respective ports are turned off, i.e., the corresponding incident arcs are not part of $A'$.

\begin{figure}[tbp]
\begin{center}
\begin{tikzpicture}[scale=1, transform shape]
        \begin{scope}[every node/.style={circle,minimum size=1pt,inner sep=1pt}]
            \node (a) at (-3,4) {$a$};
            \node (b) at (-1,4) {$b$};
            \node (c) at (1,4) {$c$};
            \node (d) at (3,4) {$d$};
            \node (s1) at (-2,2) {$S^0_1$};
            \node (s2) at (0,2) {$S^0_2$};
            \node (s3) at (2,2) {$S^0_3$};
            \node (s1_1) at (-2,1) {\raisebox{4pt}{$\vdots$}};
            \node (s2_1) at (0,1) {\raisebox{4pt}{$\vdots$}};
            \node (s3_1) at (2,1) {\raisebox{4pt}{$\vdots$}};
            \node (s1_2) at (-2,0) {$S^q_1$};
            \node (s2_2) at (0,0) {$S^q_2$};
            \node (s3_2) at (2,0) {$S^q_3$};
            \node (t) at (0,-1) {$z$};
        \end{scope}

        \begin{scope}[every edge/.style={draw=lightgray!80!black, very thick,->}]
            \path (a) edge (s1);
            \path (b) edge (s1);
            \path (b) edge (s3);
            \path (c) edge (s1);
            \path (c) edge (s2);
            \path (c) edge (s3);
            \path (d) edge (s2);
        \end{scope}

        \begin{scope}[every edge/.style={draw=black, ultra thick,->}]
            \path (s1) edge (s1_1); \path (s1_1) edge (s1_2); \path (s1_2) edge (t);
            \path (s2) edge (s2_1); \path (s2_1) edge (s2_2); \path (s2_2) edge (t);
            \path (s3) edge (s3_1); \path (s3_1) edge (s3_2); \path (s3_2) edge (t);
        \end{scope}
\end{tikzpicture}
\end{center}
    \caption{\ac{LC-MCFS} instance constructed from the Set Cover instance
    $(\univ = \{a,b,c,d\}, \sets=\{\{a,b,c\},~\{c,d\},~\{b,c\}\})$.
    Gray arcs have capacity 1, thick black arcs have capacity $|\univ|$.
    A traffic of 1 needs to be routed from every vertex $v_u$ with $u \in \univ$ to $z$.}
    \label{fig:set_cover_to_lcmcf_reduction}
\end{figure}
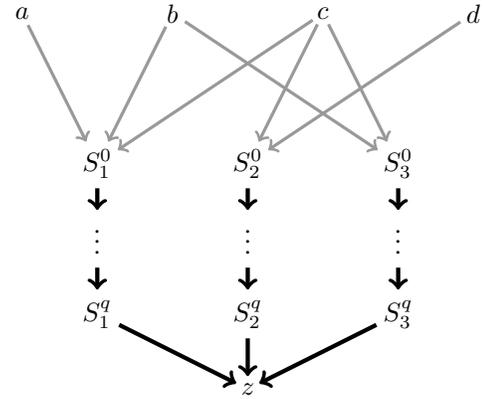
\begin{problem*}
\centering

\setcounter{equation}{0}
    \begin{flalign}
     & \makebox[0pt][l]{$\displaystyle{}\text{min }    \sum_{v \in V}  \ceil* {\frac{\sum_{p \in v} \pi_{p} }{k}}  $} \\
     & \text{s.t.} & \sum_{w \in V \backslash \lbrace u \rbrace}{x^{w}_{uv}}   &\; = \;1 & &\forall (u,v) \in V^{2}  \\
     & & x_{uv}^{w} &\; \geq \; 0 & &\forall  (u,v) \in V^{2}  \\
     & & \sum_{(u,v) \in V^{2}}  \sum_{w \in  V \backslash \lbrace
    u \rbrace} g_{uv}^{w}(a) x^w_{uv}  &\; \leq \; \theta \sum_{p \in \ports(a)} \pi_{p} c_{p} & &\forall a \in A \\
     & & \sum_{p \in \ports(a_{uv})} \pi_{p} &\; = \; \sum_{p \in \ports(a_{vu})} \pi_{p} & & \forall a_{uv} \in A \\
    & & \pi_{p} &\; \in \lbrace 0 ,1 \rbrace \;  & &\forall p \in \bigcup_{a \in A} \ports(a) 
    \end{flalign}

\medskip
\caption{\ac{SR}-based \ac{LC-MCFS} formulated as an \ac{ILP}.}
\label{problem:Green_SR_LP}
\end{problem*}

\begin{theorem}
    For every linecard size~$k \in \N^+$, \ac{LC-MCFS} is NP-hard, already on
    \acp{DAG}, both when ports are assigned to
    predetermined linecards and when they can be assigned to arbitrary ones.
\end{theorem}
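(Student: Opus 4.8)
The plan is to reduce from the NP-complete \textsc{Set Cover} problem, realizing exactly the gadget sketched in \Cref{fig:set_cover_to_lcmcf_reduction}. Given an instance $(\univ,\sets)$, for each element $u\in\univ$ I create a source $v_u$, and for each set $s\in\sets$ a directed chain $s^0\to s^1\to\dots\to s^q\to z$ of capacity-$|\univ|$ arcs into a common sink $z$; whenever $u\in s$ I add a capacity-$1$ arc $v_u\to s^0$, and I demand one unit of flow from every $v_u$ to $z$. The resulting graph is acyclic, and the length $q$ is a parameter fixed later to a polynomial in $|\univ|+|\sets|$. The first structural fact is that any flow entering a chain must traverse it entirely to reach $z$, so carrying any flow through chain $s$ activates all of its arcs; hence the sets whose chains are active form a cover of $\univ$ (each $v_u$ routes its unit through some active $s$ with $u\in s$), and conversely any cover induces a feasible routing, since the capacities $|\univ|$ never bind and each unit gray arc carries exactly its one unit (threshold $\theta=1$).

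Next I would evaluate the objective $\sum_{v\in V}\ceil{\deg_{G'}(v)/k}$ on the active subgraph $G'$. The key is that the interior of each active chain contributes \emph{exactly} $q\cdot\ceil{2/k}$ linecards, since every interior vertex $s^1,\dots,s^q$ has degree exactly $2$ whenever its chain is active. Writing $m$ for the number of active chains, the objective equals $m\cdot q\cdot\ceil{2/k}+R$, where the remainder $R$ --- collecting the sources (degree $1$), the chain heads $s^0$, and the sink $z$ --- satisfies $0\le R\le M$ for a polynomial bound $M$ in $|\univ|+|\sets|$ that is independent of which sets are chosen and of $k$. Choosing $q>M$ makes the cost windows $[\,m q\ceil{2/k},\,m q\ceil{2/k}+M\,]$ disjoint for consecutive $m$, so the objective is strictly monotone in the number of active chains. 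This single domination step is what neutralizes both the ceiling operator and the set-dependent lower-order terms, and it holds verbatim for every $k\in\N^+$ because $\ceil{2/k}\ge 1$.

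With $q$ fixed, the equivalence is immediate. Setting the budget $B\be\kappa\cdot q\cdot\ceil{2/k}+M$, a cover of size $\kappa$ yields a routing with at most $\kappa$ active chains and hence objective at most $B$; conversely, objective at most $B$ forces --- by disjointness of the windows --- at most $\kappa$ active chains, which as noted already form a cover. Thus the polynomial-size instance has a solution of value $\le B$ iff $(\univ,\sets)$ admits a cover of size $\le\kappa$, proving NP-hardness on \acp{DAG}.

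Finally, the very same graph handles both linecard regimes. For arbitrary port-to-linecard assignment the analysis above applies directly. For a predetermined assignment I additionally output a fixed partition of each vertex's incident arcs into cards --- e.g.\ placing the two arcs of each interior chain vertex on one card when $k\ge2$ and on separate cards when $k=1$ --- which keeps the per-active-chain cost at $\Theta(q)$ and leaves the remainder polynomially bounded, so the identical domination argument goes through. I expect the main obstacle to be exactly this uniformity: engineering a single gadget whose dominant cost counts \emph{active chains and nothing else}, simultaneously for all $k$ and for both assignment regimes. Tying the chain length $q$ to the instance-dependent remainder $M$ rather than to any particular $k$ is the device that overcomes it.
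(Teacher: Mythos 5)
Your proposal is correct and follows essentially the same route as the paper: the identical Set-Cover gadget with long chains of length $q$ chosen to exceed the total degree contribution of all non-chain vertices, so that the number of fully active chains dominates the objective uniformly in $k$ and in both linecard regimes. The only (cosmetic) difference is that you phrase it as a decision-version reduction with an explicit budget $B$ and disjoint cost windows, whereas the paper argues directly that optimal \ac{LC-MCFS} solutions correspond to optimal covers.
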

\begin{proof}
    We give a Cook reduction from the NP-hard
    Set Cover problem~\cite{DBLP:books/fm/GareyJ79} to \ac{LC-MCFS} on \acp{DAG}.
    In the Set Cover problem, we are given a universe $\univ$ as well as a
    family of sets $\sets = \{S_i \subseteq \univ \ssep i = 1,\dots,\sigma\}$
    and asked for a subfamily~$\setcover \subseteq \sets$ with minimum
    size~$|\setcover|$ such that $\Union_{S \in \setcover} S = \univ$. Given
    such a Set Cover instance~$(\univ, \sets)$, we create an \ac{LC-MCFS}
    instance~$(G,\cp,\demand,k)$ by first constructing a directed
    graph~$G=(V,A)$ (see \Cref{fig:set_cover_to_lcmcf_reduction} for a
    visualization of~$G$):
    Its vertices $V = V_\univ \union V_\sets \union \{z\}$ are made up of two
    vertex sets
    $V_\univ \be \{v_{u} \ssep \forall  u \in \univ\}$ and
    $V_\sets \be \{v^0_S,~\dots,~v^{q}_S \ssep \forall S \in \sets\}$ with
    $q > 2 \cdot (|\sets| + \sum_{S \in \sets} |S|)$
    as well as an additional distinguished vertex~$z$.
    Then, we create a long path of arcs~$P_S \be
    \{(v^0_S,v^1_S),~\dots,~(v^{q-1}_S,v^{q}_S),~(v^{q}_S,z)\}$
    for each set~$S \in \sets$, and let $A = A_\univ \union A_\sets$ where
    $A_\univ \be \{(v_u,v^0_S) \ssep \forall S \in \sets, u \in S\}$ and
    $A_\sets \be \Union_{S \in \sets} P_S$.
    Arc capacities and traffic demands are chosen as follows:
    \begin{align*}
        \cp(a) &\be
            \begin{cases}
                1 &\mbox{if } a \in A_\univ\\
                |U| &\mbox{if } a \in A_\sets,\\
            \end{cases}\\
        \demand_{xy} &\be
            \begin{cases}
                1 &\mbox{if } (x,y) \in V_\univ \times \{z\}\\
                0 &\mbox{otherwise.}\\
            \end{cases}
    \end{align*}

    It remains to show that an optimal solution for this \ac{LC-MCFS} instance
    grants us an optimal solution for the original Set Cover instance.
    A feasible \ac{LC-MCFS} solution $A'$ contains at least one path from
    $v_u$ to $z$ for each item~$u \in \univ$.
    This path consists of one arc from $A_\univ$ and a path $P_S \subseteq
    A_\sets$ with $u \in S$.
    Thus, $\{S \ssep P_S \subset A'\}$ is a feasible solution for the
    original Set Cover instance, i.e., one that contains at least one covering
    set for each item.
    An optimal \ac{LC-MCFS} solution contains each path $P_S$ either fully or
    not at all since a partial containment would only worsen the value of the
    solution without changing its feasibility.
    Whenever a path~$P_S$ is taken into the solution, this increases the number
    of linecards for every vertex $v^i_S$ with $i \in \{1,\dots,q\}$, $S \in
    \sets$ by one (if $k\geq2$; by two if $k = 1$).
    This also holds in the case of predetermined linecards as there is
    always only one way to assign the two ports to the linecards on these vertices.
    Since the overall number of possible ports on other vertices
    $\sum_{v \in V_U \union \{v^0_S | S \in \sets\} \union \{z\}} \deg(v) = 2
    \cdot (|\sets| + \sum_{S \in \sets} |S|)$ is less than $q$, the number of
    linecards belonging to paths~$P_S \subseteq A'$ dominates the objective
    value of the solution.
    An optimal \ac{LC-MCFS} solution, thus, always contains a minimum number of
    paths~$P_S$, and the Set Cover solution $\{S \ssep P_S \subset A'\}$
    is optimal.
\end{proof}

To adapt the reduction to the setting where any arc has an opposing directed arc
with the same capacity, we can introduce additional traffic demands such that
the resulting traffic necessarily blocks the back arcs:
\begin{align*}
    \demand_{xy} &=
        \begin{cases}
            1 &\mbox{if } (x,y) \in V_\sets \times V_\univ\\
            |U| &\mbox{if } (x,y) \in \{z\} \times V_\sets.\\
        \end{cases}
\end{align*}

\section{SR-based Heuristic}\label{sec:greenSR}
In this section, we describe an \ac{SR}-based heuristic for \ac{LC-MCFS} that we developed since computing an optimal \ac{LC-MCFS} solution is often too slow on real-world networks.
The heuristic is based on the \ac{ILP} introduced in \cite{bhatia}; its core idea is to decrease the number of possible paths between two distinct nodes.
Instead of allowing every possible path, we only use paths that are made up of two node segments, i.e., those that can be created by the concatenation of two shortest paths. To determine the amount of traffic that is routed over arc~$a$, we need to define two functions:
The first function~$F_{uw}(a)$ determines the amount of traffic routed over arc~$a$ when routing a unit flow via the shortest path between~$u$ and~$w$. The second function
\begin{align*}
    g_{uv}^{w}(a) = \demand_{uv} \cdot (F_{uw}(a) + F_{wv}(a))
\end{align*}
determines the amount of $u$-$v$-traffic routed over arc $a$ when choosing the intermediate node $w$.
The overall traffic on arc $a$ is given by
\begin{align*}
    \sum_{(u,v) \in V^{2}} \sum_{w \in V} g_{uv}^{w}(a)x_{uv}^{w}
\end{align*}
with $x_{uv}^w$ denoting the fraction of $u$-$v$-traffic that is routed via~$w$.
Now, we can formulate a 2-\ac{SR} version of \ac{LC-MCFS}, see \Cref{problem:Green_SR_LP}.
We keep the port variables and the target function from \Cref{problem:ideal_lp} as we still want to minimize the number of active linecards in the network.
However, the constraints have changed:
The first two types of constraints ensure that all traffic is routed via intermediate nodes and that all routed traffic is positive.
Similar to \Cref{problem:ideal_lp}, the third type of constraints ensures that the \ac{MLU} is kept below $\theta$, and the fourth one models full duplex connections.

\section{Evaluation}\label{sec:evaluation}
In this section, we present the datasets and setup used for our evaluation.
This is followed by a presentation and discussion of the results.

\begin{table}
	\centering
	\caption{Overview over the selected Repetita instances.}
	\label{tab:repetita_instances}
	\scriptsize
	\begin{tabular}{c l r r}
		\toprule
		Identifier & Name             & Nodes & Edges \\
        \midrule
		A          & GridNet          & 9     & 40    \\
		B          & FCCN             & 23    & 54    \\
		C          & BandCon          & 21    & 56    \\
		D          & BTAsiaPacific    & 20    & 62    \\
		E          & FuNet            & 26    & 62    \\
		F          & RedIris          & 19    & 64    \\
		G          & GtsCzechRepublic & 32    & 66    \\
		H          & BTEurope         & 24    & 74    \\
		I          & Renater2008      & 33    & 86    \\
		J          & JanetBackbone    & 29    & 90    \\
		K          & DeutscheTelekom  & 30    & 110   \\
		L          & Renater2010      & 43    & 112   \\
		M          & Geant2012        & 40    & 122   \\
		N          & Forthnet         & 62    & 124   \\
		O          & Chinanet         & 42    & 132   \\
		P          & Ulaknet          & 82    & 164   \\
		Q          & Garr2012         & 61    & 178   \\
		R          & RedBestel        & 84    & 202   \\
		S          & Uninett2010      & 74    & 202   \\
		T          & Oteglobe         & 83    & 204   \\
		U          & Globenet         & 67    & 226   \\
		\bottomrule
	\end{tabular}
\end{table}
\subsection{Datasets}
We used two datasets for the evaluation, the first one consisting of 21 randomly selected instances from the Repetita dataset~\cite{repetita} (cf.~\Cref{tab:repetita_instances}).
This dataset is also included in the Topology Zoo~\cite{topo-zoo}, which contains several real-world backbone network topologies with artificial traffic data. The Topology Zoo is a publicly available dataset that was created to test and optimize different \ac{TE} approaches.
Thus, the traffic data is designed to be challenging in terms of the \ac{MLU}.
Even with an \ac{MLU}-optimal routing algorithm, the \ac{MLU} is at least $90$\%.
Hence, we scale down the traffic to obtain some unused capacity, choosing the scaling factor $0.5$ to mimic the effect of low utilization.

Furthermore, the topologies do not contain information about the employed hardware.
Thus, we assume that every link consists of $4$ parallel links.
Each of the parallel links corresponds to one port per connected router.
Further, for every $8$ ports of a router, we assume that there is a corresponding linecard providing service exclusively in the backbone network.
To give an example: When a router in the given topology supports up to $15$ links, we assume that only one of its linecards is used exclusively in the network. The other linecard may also provide a sixteenth link, which is not mentioned in the topology data, to connect the backbone to its customers. Thus, we assume that it cannot be deactivated. For example, if the A99-8X100GE-TR is used within the network, these assumptions are fulfilled.
With these assumptions it is possible to evaluate our approach on many different topologies, enabling us to gain better insight into the potential of our approach.

Our second dataset consists of real traffic and topology data collected from the backbone network of a globally operating \ac{ISP} in 2020 and 2022. The dataset includes daily topology snapshots; the traffic matrices were measured every $15$ minutes with the method described in \cite{methode-horneffer}.
As the topology evolved over time, the number of nodes varies from $160$ to $190$, while the number of links varies from $3700$ to $4600$. We made the same assumptions about this topology as for the Topology Zoo data, i.e., that there is a linecard providing service exclusively in the network for every $8$ ports of a router, and that each of these ports corresponds to one link.

\begin{figure}
		\centering
		\includegraphics[width=.98\linewidth]{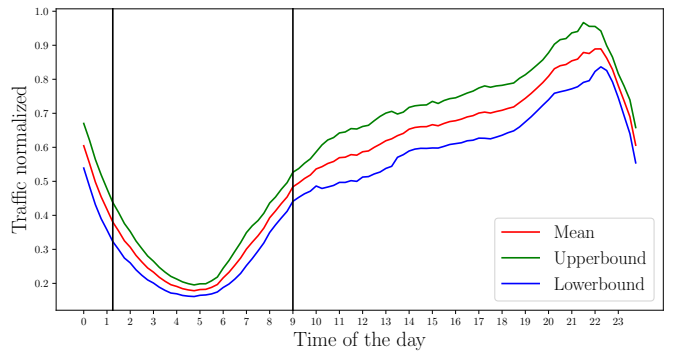}

	\caption{Traffic volume throughout the day in 2022.}
	\label{FIG:traffic_22}
\end{figure}

As the number of instances is too large to evaluate our approach on every instance, we choose a random instance for every month in 2020 and 2022. To find times of the day when the traffic is low enough to deactivate some linecards, we used a probabilistic model where the amount of traffic is represented by a time-dependent normally distributed random variable. For every time of the day, we calculated the expected mean and deviation of the traffic, yielding a 70\% confidence interval. Thus, for each given time point, the traffic is with a certainty of 70\% inside this interval. Moreover, we can assure with a certainty of 85\% that the amount of traffic is lower than the upper bound. We visualized the results of this procedure in Figure \ref{FIG:traffic_22}. The red line marks the upper bound while the green line marks the lower bound of the confidence interval. In accordance with these findings, we evaluate our approach on traffic matrices captured at 01:15: At this time of day, the amount of traffic is lower than 50\% of the daily maximum, but above 40\%. Thus, we can expect that a reduced topology that is capable of managing this amount of traffic can be used until 09:00.

\subsection{Algorithms and Resource Demands}
We compare two algorithms, one that yields an optimal solution for the \ac{LC-MCFS} problem and our 2-\ac{SR}-based heuristic. The optimal solution of the \ac{LC-MCFS} problem is obtained by solving the \ac{ILP} stated in \Cref{problem:ideal_lp} with CPLEX~\cite{cplex}. Similarly, CPLEX was also used to implement the \ac{ILP} for our heuristic as stated in \Cref{problem:Green_SR_LP}. The execution of either algorithm required several hundred Gigabytes of RAM. Therefore, we used a computer equipped with two AMD EPYC 7452 CPUs, 256GB of RAM and 64-bit Ubuntu~20.04.1.

In real-life practical scenarios, networks need to be able to handle sudden traffic spikes.
To simulate this need for extra capacity, we choose an upper bound~$\theta$ of 70\% for the \ac{MLU}. This upper bound ensures that a traffic increase of 40\% can still be handled without congestion.

\subsection{Results}
\begin{figure*}
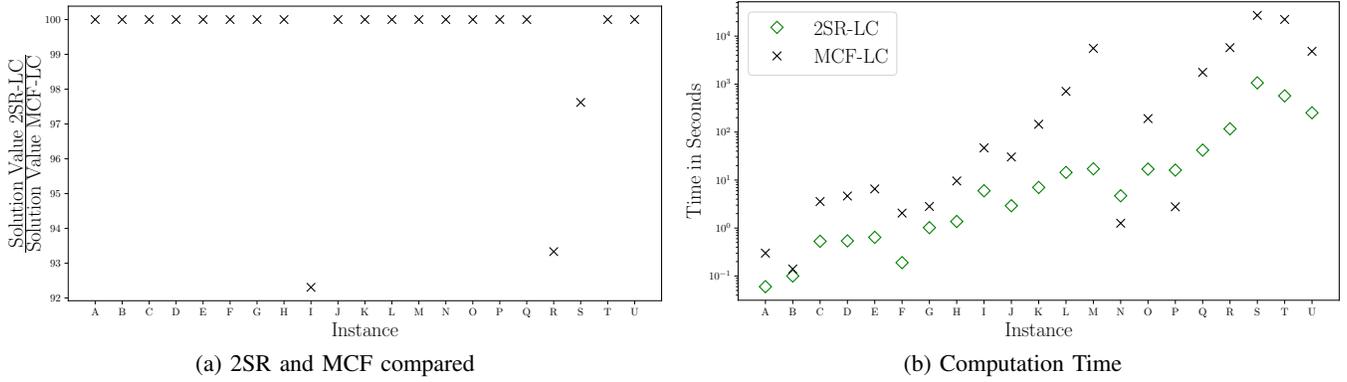

		\centering
		\begin{subfigure}{.49\textwidth}
		\centering
		\includegraphics[width=.98\linewidth]{/Auswertung-Linecards-Repetita-alle-fraction.pdf}
		\caption{2SR and MCF compared}
		\label{FIG:MCFvs2sr}
\end{subfigure}
\begin{subfigure}{.49\textwidth}
		\centering
		\includegraphics[width=.98\linewidth]{/Auswertung-Linecards-Repetita-time.pdf}
		\caption{Computation Time}
		\label{FIG:Time}
\end{subfigure}
\caption{Results Repetita Data}
\end{figure*}
\begin{figure*}
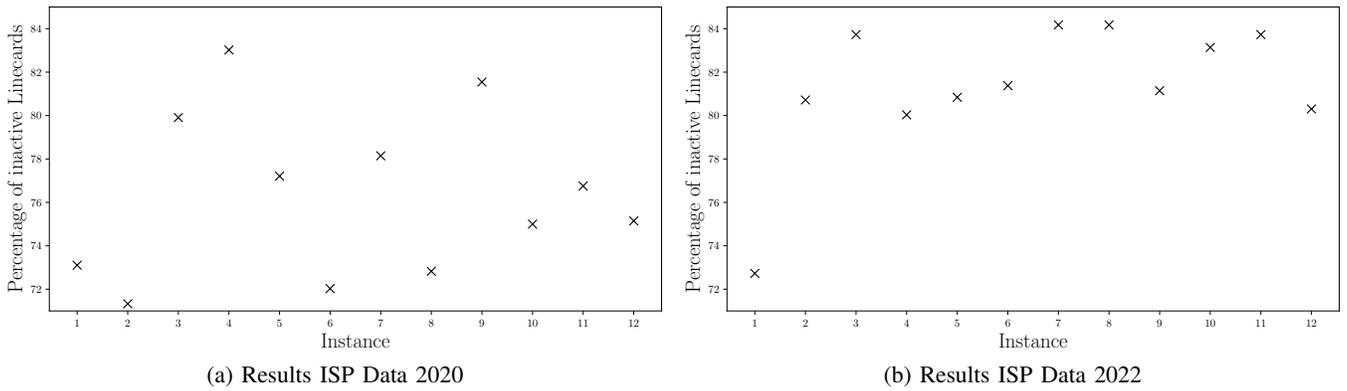

\centering
\begin{subfigure}{.49\textwidth}
		\centering
		\includegraphics[width=.98\linewidth]{/Auswertung-Linecards-2020.pdf}
		\caption{Results ISP Data 2020}
		\label{FIG:real_20}
\end{subfigure}
\begin{subfigure}{.49\textwidth}
		\centering
		\includegraphics[width=.98\linewidth]{/Auswertung-Linecards-2022.pdf}
		\caption{Results ISP Data 2022}
		\label{FIG:real_22}
\end{subfigure}
\caption{Results ISP Data}
\label{FIG:real_topologies}
\end{figure*}

In this section, we present and discuss the results of our evaluation, beginning with results pertaining to the Repetita dataset. We compare 2SR-LC, our newly developed heuristic, with the \ac{LC-MCFS} \ac{ILP}, which we refer to as MCF-LC. As MCF-LC is an upper bound for the number of inactive linecards, this lets us gain insights into the optimality of our heuristic.
\Cref{FIG:MCFvs2sr} shows a scatter plot of solution values of 2SR-LC divided by the optimal MCF-LC solution values.

For $18$ out of $21$ instances, 2SR-LC yields an optimal solution. However, for the instances I~(Renater08), R~(Red-Bestel), and S~(Uninett2010), the heuristic did not manage to obtain an optimal solution.
Nevertheless, 2SR-LC always manages to find a solution that is at most ten percent worse than the optimal solution.
In absolute numbers, 2SR-LC differed only by one linecard for all three instances.

In \Cref{FIG:Time}, we compare the computation time of both approaches. The topologies are sorted by their number of edges. As one would expect, the running time in general seems to increase with the number of edges contained in the topology; this holds true for both approaches.
Further, it becomes evident that 2SR-LC is faster than MCF-LC. There are two exceptions to this: the instances Forthnet and Ulaknet. Note that both algorithms do not need longer than ten seconds to solve these instances.
Since they are structured like trees, the algorithms have little possibility to steer the traffic in these networks. In general, it holds that the larger the instance in terms of edges, the larger the computation time difference.
Here, the benefit of our heuristic becomes very clear: In some cases, 2SR-LC is almost 100 times faster than MCF-LC.

For all 21 instances, the resulting \ac{MLU} was in a range from 0.68 to 0.7. 2SR-LC and MCF-LC both tend to make full use of the capacity threshold. Thus, it is worth to further investigate which upper bound serves as the best tradeoff between reliability in cases of traffic spikes and potential for switching off hardware.

In \Cref{FIG:real_topologies}, we have visualized our results based on the topologies and traffic measured in the backbone network of a globally operating \ac{ISP}. It was not possible to calculate an MCF-LC solution for these instances. As they consist of far more nodes and edges than the Repetita topologies, the computation took too long to obtain even a feasible solution. In fact, the CPLEX solver did not manage to find at least a feasible solution after one week of running time. Nevertheless, we decided to calculate a routing policy with 2SR-LC.
This gives us some insight into the potential of what could be achieved when additional real-world constraints were to be considered in the model.

The fraction of inactive linecards regarding the 2020 topology can be found in \Cref{FIG:real_20}. It varies from at least $72$\% to $83$\% at most. The instances from 2022 produced similar values, with at least $73$\% and at most $84$\% of all linecards being switched off. The \ac{MLU} is $0.7$ for every instance of 2020 and 2022.

With this evaluation, we have shown that our approach is able to handle instances with more than 400 nodes.
Moreover, it is clearly possible to obtain a near-optimal solution of the MCF-LC problem by shrinking the solution space to all paths that can be described as 2SR paths. As we evaluated our approach on 21 randomly chosen datasets representing different backbone topologies, we suggest that especially for topologies occurring in backbone networks, our heuristic is a viable way to find close-to-optimal solutions. In addition, since we decided to use 2-\ac{SR}, which is a very lightweight \ac{TE} technology, the approach can be implemented without adding unnecessary overhead to the network.

More generally, we have shown that there is a huge amount of saving potential within the topology of a backbone network. Nevertheless, to be applicable to a backbone network in a practical setting, we have to consider several other constraints. For example, we did not consider any resilience constraints beyond an upper bound for the \ac{MLU}. In the future, we have to consider that link failures may happen, and take care of this problem.

\section{Conclusion and Future Work}
Within this paper, we proposed a new green \ac{TE} approach. Based on a closer inspection of the energy consumption in a backbone network, this approach aims to minimize the number of active linecards in the network as they consume the vast majority of the power.

We analyzed the overall complexity of this idea and showed that this problem is NP-hard. In fact, we could show that this even holds for simple directed acyclic graphs.

Furthermore, we developed an \ac{SR}-based heuristic that drastically decreases the number of possible packet paths by using just two segments. Contrary to older \ac{TE} technologies, this keeps the overhead induced in the network small.
During our evaluation, we showed that for most instances, the heuristic yields nearly optimal results while also significantly reducing the computation time. We further evaluated our approach on real-world topologies, showing the potential of our approach in terms of energy savings.

In the future, more constraints have to be considered. In particular, the resiliency of the reduced topologies has to be examined. Nevertheless, we have shown that it is possible to use 2-SR for green \ac{TE}. This approach reduces the algorithmic complexity of the problem such that it is possible to find a good solution in a reasonable amount of time.

\section*{Acknowledgment}
This work was supported in part by the German Research Foundation (DFG), Project No. AS 341/7-1.

\bibliographystyle{IEEEtran}
\bibliography{main}
\balance

\end{document}